\let\@font@warningori\@font@warning
\newcommand\shutup{\def\@font@warning##1{}}
\newcommand\youcanspeak{\let\@font@warning\@font@warningori}
\newcommand{\smat}[1]{ \big(\begin{smallmatrix} #1 \end{smallmatrix}\big)}
\providecommand{\setN}{\mathbb{N}}
\providecommand{\setZ}{\mathbb{Z}}
\providecommand{\setQ}{\mathbb{Q}}
\providecommand{\setR}{\mathbb{R}}
\newcommand {\cP}         {\mathcal{P}}
\newcommand {\Z}          {\mathbb{Z}}
\newcommand {\Q}          {\mathbb{Q}}
\newcommand {\R}          {\mathbb{R}}
\newcommand {\N}          {\mathbb{N}}
\newcommand{\vol}{\mathrm{vol}}
\renewcommand{\epsilon}{\varepsilon}
\renewcommand{\leq}{\leqslant}
\renewcommand{\geq}{\geqslant}
\theoremstyle{plain}
\newtheorem{theorem}{Theorem}
\newtheorem{lemma}[theorem]{Lemma}
\theoremstyle{definition}
\title{Minimizing the number of lattice points in a translated polygon}
\author{Friedrich Eisenbrand\footnote{Supported by the Alexander von
    Humboldt Foundation (AvH) and the German Research Foundation
    (DFG)} \\
   Technische Universität Berlin\\
  \texttt{eisenbrand@tu-berlin.de}
   \and
    Nicolai Hähnle\\
   Technische Universität Berlin\\
  \texttt{haehnle@math.tu-berlin.de}}
\date{\today}
\begin{document}

\maketitle

\begin{abstract}
  \noindent
  The parametric lattice-point counting problem is as follows: Given
  an integer matrix $A \in \Z^{m × n}$, compute an explicit formula
  parameterized by $b \in \setR^m$ that determines the number of integer points in
  the polyhedron $\{x \in \R^n : Ax \leq b\}$.  In the last decade, this
  counting problem has received considerable attention in the
  literature.  Several variants of Barvinok's algorithm have been
  shown to solve this problem in polynomial time if the number  $n$ of
  columns of $A$ is fixed.  Central to our
  investigation is the following question:
  \begin{quote}
    Can one also efficiently  determine a parameter $b$
    such that the number of integer points in $\{x \in \R^n : Ax \leq b\}$
    is minimized?
  \end{quote}
  Here, the parameter $b$ can be chosen from a given polyhedron
  $Q\subseteq\setR^m$.
  Our main result is a proof that finding such a minimizing parameter
  is $NP$-hard, even in dimension $2$ and even if the parametrization
  reflects a translation of a $2$-dimensional convex polygon.
  This
  result is established via a relationship of this problem to
  arithmetic progressions and simultaneous Diophantine approximation.

  On the positive side we show that in dimension $2$ there exists a
  polynomial time algorithm for each fixed $k$ that either determines
  a minimizing translation or asserts that any translation contains at
  most $1 + 1/k$ times the minimal number of lattice points.
\end{abstract}

\section{Introduction}

As many combinatorial optimization problems can be formulated as an
integer program, also their corresponding counting problems can be
formulated as the problem of counting integer points in a
polytope. Although both problems are hard in general, they can be
efficiently solved if the number of variables is fixed.  This was
shown by Lenstra~\cite{Lenstra83} in the case of integer programming and
by Barvinok~\cite{MR1304623} for the integer point counting
problem, see also \cite{Barvinok02,MR2271992}.

A \emph{parametric polyhedron} is a set of the form $P_b = \{ x \in
\setR^n \colon Ax\leq b\}$ for some matrix $A \in \setR^{m×n}$ and $b \in \setR^m$. The
right-hand-side $b$ is the \emph{parameter} of $P_b$. 
Barvinok and
Pommersheim~\cite{BarvinokPommersheim99} extended Barvinok's integer
point counting algorithm to the parametric case. They describe an
algorithm that runs in polynomial time if the dimension is fixed and
that computes a \emph{quasipolynomial} 
whose value at $b$ equals $|P_b \cap \setZ^n|$.  Since then, several
authors described alternative and more efficient algorithms to compute
this quasipolynomial~\cite{MR2383436,VerdoolaegeWoods2007}.
 Effective
implementations of Barvinok's algorithm have been provided by De~Loera
et al.~\cite{MR2094541} and by Köppe and Verdoolaege~\cite{MR2383436}.
Applications of parametric integer counting can, for example, be found
in compiler optimization~\cite{MR2312001}.
Other very interesting, though not polynomial-time approaches to the
parametric integer counting problem can for example be found in
\cite{MR1797548,MR1931190,MR2161199}.

In this paper we consider the problem of finding a parameter $b \in Q$
such that $|P_b \cap Z^n|$ is minimized. More precisely, we consider the
following decision problem.
\begin{quote}
  \textsc{Integer Point Minimization}

  \begin{tabular}{ll}
    Given:& $A \in \Q^{m × n}$, a rational polyhedron $Q \subseteq \R^{m }$, $k \in \N$ \\
    Decide:& $\exists b \in Q  :~ | \{ x \in \Z^n ~:~ Ax \leq b \} | \leq k$ \\
  \end{tabular}
\end{quote}
We remark that if $k = 0$ and $n$ is fixed, then this problem can be
solved in polynomial time with a technique of Kannan~\cite{Kannan92},
see also~\cite{EisenbrandShmonin2008}.

\subsubsection*{Contributions of this paper}

Our first result is a proof that integer point
minimization is $NP$-complete, even
if $n=2$, i.e. the parametric polyhedron resides in the Euclidean
plane and then even if the parametric polyhedra are the translations
of  some convex polygon along the $x$-axis.  In other words, we show
that the following problem is $NP$-complete.

\begin{quote}
\textsc{Polygon Translation}
  \begin{tabbing}
  Given:  \quad  \=  $A \in \setQ^{m × 2}$ and $b \in \setQ^m$ defining
  a convex polygon $P = \{ x \in \setR^2 \colon Ax\leq b\}\subseteq\setR^2$
   and $k \in \setN$ \\
  Decide: \>
   $\exists \lambda$,  $0 \leq\lambda \leq1$
   such that
   $
     P + \smat{-\lambda \\0} = \{ x +\smat{-\lambda \\0} \colon x \in P\}
     $
   contains at most $k$ integer points.
 \end{tabbing}
\end{quote}
Clearly, this is an instance of the parametric integer counting
problem with $Q\subseteq\setR^m$ being the 1-dimensional polytope $Q = \{ x \in \setR^m \colon x = b
- \lambda a_1, \, 0 \leq \lambda \leq1\}$, with  $a_1$ being the first column of
$A$.

\bigskip

\noindent
Second, we  show that there is a polynomial-time approximation
scheme for the optimization version of \textsc{Polygon Translation}.
More precisely,
there exists an algorithm that runs in polynomial time for any fixed
integer $k$ and
either determines a minimizing translation or asserts that any
translation contains at most $1 + 1/k$ times the minimal number of
lattice points.

This result combines techniques from the geometry of
numbers with classical techniques from discrepancy theory. The
\emph{discrepancy} of a polygon is the absolute value of the difference between
the number of integer points in the polygon and its area.
There is a rich literature bounding this
discrepancy, see, e.g.~\cite{MR1420620}, starting with
\emph{Gauss' circle problem}. Gauss~\cite{MR616130} investigated
the discrepancy of a disk of radius $R$ around $0$.
He showed that this discrepancy is bounded by $O(R)$,
which implies that the number of integer points is $\Theta(R^2)$ because the area of the disk is $\pi R^2$.
Discrepancy bounds also exist for polygons~\cite{MR1420620},
but they involve the \emph{length} of the boundary.
Instead, we bound the discrepancy in terms of the \emph{lattice width}
of the input polygon:
the number of integer points in a polygon of high lattice width is very close to its area.
On the other hand,
we adapt a technique of Kannan~\cite{MR1105115} to solve instances with thin polygons exactly.






\section{Polygon translation is NP-complete}
\label{sec:hardness}


In this section, we provide our main hardness result. First, we
discuss the relation of \emph{arithmetic progressions} with the
polygon translation problem.

\subsection{Arithmetic progressions and their pulse functions}

The \emph{arithmetic progression} defined by the triple $(a,k,d)$ is the set
$A = \{ a, a + d, a + 2d, \dots, a + kd \}$.
We say that a function of the form
\[
  p(x) = \begin{cases}
    0 & \mbox{if } |x - y| < \varepsilon \mbox{ for some } y \in A \\
    1 & \mbox{else}
  \end{cases}
\]
where $\varepsilon > 0$ is a \emph{pulse function}.
The next lemma establishes a relation of pulse functions with the
polygon translation problem.
Figure~\ref{fig:quadrilateral-for-pulse-function} illustrates the
construction with an example.
\begin{figure}
  \begin{center}
    \begin{tikzpicture}
      \draw[very thick,red] (0,0) -- (0.28,0);
      \draw[very thick,red] (2,1) -- (2.53,1);
      \draw[very thick,red] (4,2) -- (4.78,2);

      \foreach \y in {0,1,2} \foreach \x in {0,1,...,10} \draw[help
      lines,fill] (\x,\y)  circle (1pt);

      \draw[very thick] (0.28,0) node[below] {$(\ell_1,y_1)$} -- (4.78,2) node[above] {$(\ell_2,y_2)$}
        -- (7.62,2) node[above] {$(r_2,y_2)$} -- (9.12,0) node[below]
        {$(r_1,y_1)$} --cycle;
        \draw[very thick,green] (9,0) -- (9.12,0);
        \draw[very thick,green] (8,1) -- (8.37,1);
        \draw[very thick,green] (7,2) -- (7.62,2);

    \end{tikzpicture}
  \end{center}
  \caption{A quadrilateral for a pulse function $p(x)$ defined by the
    arithmetic progression $\{0.2, 0.45, 0.7\}$
  with $\varepsilon = 0.08$. The red line segments have length $0.28$, $0.53$
  and $0.78$ (bottom up) and the green line segments have length
  $0.12, 0.37$ and $0.62$ (bottom up). As the polygon is translated
  by $x$ to the left, the number of integer points inside the polygon
  is $17 + p(x)$. }
  \label{fig:quadrilateral-for-pulse-function}
\end{figure}
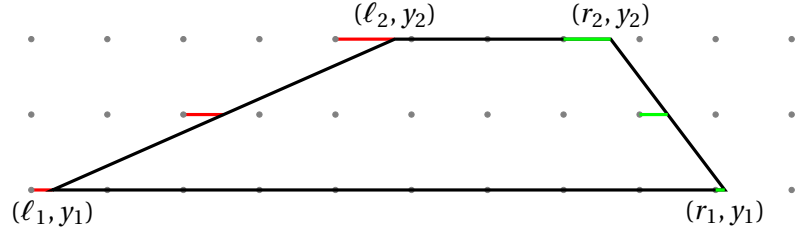
%
\begin{lemma}
  \label{lemma:quadrilateral-for-pulse-function}
  Let $p : \R \to \{ 0, 1 \}$ be a pulse function for the arithmetic progression
  $A = \{ a, a + d, \ldots, a + kd \}$ with parameter $0 < \varepsilon \leq \frac{d}{2}$
  and discontinuities only in $(0,1)$. That is, $a - \varepsilon > 0$ and $a + kd + \varepsilon < 1$.

  Let $y_1, y_2 \in \Z$ such that $y_2 - y_1 = k$.
  Consider a convex quadrilateral $P$
  with vertices $(\ell_1, y_1)$, $(r_1, y_1)$, $(\ell_2, y_2)$, and $(r_2, y_2)$.
  If we have
  \begin{enumerate}
    \item $\ell_1 < r_1$ and $\ell_2 < r_2$,
    \item $\{ \ell_1 \} = a + \varepsilon$ and $\{ \ell_2 \} = a + kd + \varepsilon$,
    \item $\{ r_1 \} = a - \varepsilon$ and $\{ r_2 \} = a + kd - \varepsilon$, and
    \item $k ~|~ (\lfloor \ell_2 \rfloor - \lfloor \ell_1 \rfloor)$ and
          $k ~|~ (\lfloor r_2 \rfloor - \lfloor r_1 \rfloor)$,
  \end{enumerate}
  then there exists an $M \in \N$ so that
  \begin{displaymath}
       | (t \begin{pmatrix} -1 \\ 0 \end{pmatrix} + P) \cap \Z^2 | = M +
       p(\{t\}) \text{ for } t \in [0,1].
  \end{displaymath}

\end{lemma}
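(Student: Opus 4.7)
The plan is to decompose the count of lattice points in the translated polygon by slicing at each integer height, and then to track how each slice contributes to the total count as $t$ varies in $[0,1]$.

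Since the $y$-range of $P$ is $[y_1,y_2]$ and $y_2-y_1=k\in\Z$, every lattice point of $P$ (and of any horizontal translate) sits on a line $y=y_1+j$ with $j\in\{0,1,\dots,k\}$. At height $y_1+j$ the polygon $P$ meets an $x$-interval $[L_j,R_j]$ obtained by linear interpolation between the left and right vertices:
\[
L_j=\ell_1+\tfrac{j}{k}(\ell_2-\ell_1),\qquad R_j=r_1+\tfrac{j}{k}(r_2-r_1).
\]
Writing $\ell_2-\ell_1=(\lfloor\ell_2\rfloor-\lfloor\ell_1\rfloor)+(\{\ell_2\}-\{\ell_1\})$ and using hypothesis (4) together with hypotheses (2) and (3), the integer part $\lfloor\ell_2\rfloor-\lfloor\ell_1\rfloor$ is a multiple of $k$, and the fractional difference equals $kd$. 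Thus $\tfrac{j}{k}(\ell_2-\ell_1)\in\Z+jd$, and an analogous computation for $R_j$ gives
\[
\{L_j\}=a+jd+\varepsilon,\qquad \{R_j\}=a+jd-\varepsilon\qquad(j=0,1,\dots,k).
\]
Here the assumptions $a-\varepsilon>0$ and $a+kd+\varepsilon<1$ guarantee that the right-hand sides lie in $(0,1)$, so these are genuine fractional parts.

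Now, the count at height $y_1+j$ after translation by $(-t,0)$ is
\[
N_j(t)=|[L_j-t,R_j-t]\cap\Z|=\lfloor R_j-t\rfloor-\lceil L_j-t\rceil+1.
\]
As $t$ increases through $[0,1]$, the function $N_j(t)$ only changes when $L_j-t$ or $R_j-t$ passes through an integer, i.e.\ when $t\equiv\{L_j\}$ or $t\equiv\{R_j\}\pmod 1$. By the previous paragraph, these happen exactly at $t=a+jd+\varepsilon$ and $t=a+jd-\varepsilon$, both inside $(0,1)$. A direct case analysis of the floor/ceiling shows that as $t$ increases past $a+jd-\varepsilon$ the count $N_j$ drops by $1$, and as $t$ increases past $a+jd+\varepsilon$ it jumps back up by $1$. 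Hence
\[
N_j(t)=N_j(0)-\mathbf{1}\bigl[t\in(a+jd-\varepsilon,\,a+jd+\varepsilon)\bigr]\qquad\text{for }t\in[0,1].
\]

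Summing over $j$, and using that $\varepsilon\leq d/2$ so the intervals $(a+jd-\varepsilon,a+jd+\varepsilon)$ are pairwise disjoint, the total lattice-point count is $M_0$ outside the union of these intervals and $M_0-1$ inside, where $M_0=\sum_j N_j(0)$. Comparing with the definition of the pulse function $p$ shows that this equals $M+p(t)$ with $M:=M_0-1$. Finally, for the boundary value $t=1$ note that $\{1\}=0$ and $p(0)=1$ (since $0$ is not within $\varepsilon$ of any point of $A$), while the translation $(-1,0)$ is a lattice vector so the count equals the one at $t=0$, namely $M_0=M+1=M+p(\{1\})$. The main subtlety throughout is matching the open/closed conventions in the indicator intervals with the ``strict inequality'' in the definition of $p$; the computation at the critical values $t=a+jd\pm\varepsilon$ confirms that the boundary lattice points remain in the polygon at those exact parameter values, so the drop occurs on the open interval, which is precisely where $p$ vanishes.
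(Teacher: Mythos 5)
Your proof is correct and follows essentially the same approach as the paper's: slice the quadrilateral at integer heights $y_1+j$, compute the fractional parts of the slice endpoints using hypotheses (2)--(4), and observe that as $t$ sweeps through $[0,1]$ each slice's count drops by one exactly on the open interval $(a+jd-\varepsilon,\,a+jd+\varepsilon)$; disjointness of these intervals (from $\varepsilon\leq d/2$) gives the sum. You spell out a few details the paper leaves implicit (the interpolation formula for $L_j,R_j$, the floor/ceiling case analysis, and the boundary check at $t=1$), but the decomposition and the key calculation are identical.
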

 \begin{proof}
   Consider the horizontal slice $L_i = P \cap \{ (x,y) ~|~ y = y_1 + i
   \}$ for $0 \leq i \leq k$.  Using conditions (2) to (4) of the Lemma,
   one verifies that $L_i$ is a line segment $[\alpha_i,\beta_i] × \{ y_1 + i
   \}$ with $\{ \alpha_i \} = a + id + \varepsilon$ and $\{ \beta_i \} = a + id - \varepsilon$.
   In other words, as the segment sweeps left, an integer point leaves
   at all times $t$ with $\{t\} = \{ \beta_i \} = a + id - \varepsilon$, and an
   integer point enters at all times $t$ with $\{t\} = \{ \alpha_i \} = a
   + id + \varepsilon$.  Taking into account that $L_i$ is relatively closed,
   one has
   \[ | (t \begin{pmatrix} -1 \\ 0 \end{pmatrix} + L_i) \cap \Z^2 | =
   M_i +
      \begin{cases}
        0 & \mbox{if } a + id - \varepsilon < \{ t \} < a + id + \varepsilon \\
        1 & \mbox{otherwise}
      \end{cases}
   \]
   for some $M_i \in \N$.
   In fact, $M_i = \lfloor \beta_i \rfloor - \lceil \alpha_i \rceil$.

   Summing over all $L_i$, $0 \leq i \leq k$,
   and using the fact that the intervals $(a + id - \varepsilon, a + id + \varepsilon)$
   are pairwise disjoint,
   we obtain the claim of the Lemma.
 \end{proof}



Next we consider a decision problem involving several pulse
functions.
\begin{quote}
  \textsc{Arithmetic Progression Meeting}

  \begin{tabular}{ll}
    Given:& pulse functions $p_1$, \dots, $p_n$
      (encoded as their parameters $a^{(j)}$, $k^{(j)}$, $d^{(j)}$, $\varepsilon^{(j)} \in \Q$)
    \\
    Decide:& $\exists x \in \R :~ p(x) = \sum_{j=1}^n p_j(x) = 0$
  \end{tabular}
\end{quote}
We delay the proof of the next theorem to
Section~\ref{sec:simult-dioph-appr}.
\begin{theorem}
  \label{thm:hardness-apm}
  Arithmetic Progression Meeting is $NP$-hard.
\end{theorem}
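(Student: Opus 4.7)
The plan is to reduce from a standard NP-hard variant of simultaneous Diophantine approximation (SDA), namely: given rationals $\alpha_1,\ldots,\alpha_n$, a rational $\varepsilon>0$, and an integer $N\geq 1$, decide whether there is an integer $q$ with $1\leq q\leq N$ and $\|q\alpha_i\|\leq\varepsilon$ for all $i$, where $\|\cdot\|$ denotes distance to the nearest integer. This is a classical result of Lagarias that I would take as a black box. The job is then to translate the Diophantine conditions into pulse-function conditions.

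Given an SDA instance, I would build an APM instance with $n+1$ pulse functions. The first pulse $p_0$ is designed to enforce approximate integrality in the window $[1,N]$: its arithmetic progression is $\{1,2,\ldots,N\}$, encoded by $a^{(0)}=1$, $d^{(0)}=1$, $k^{(0)}=N-1$, all of polynomial bit length since $k^{(0)}$ is stored in binary, together with a small tolerance $\varepsilon^{(0)}=\delta$ to be fixed later. For each $i\geq 1$, the pulse $p_i$ directly encodes the condition $\|x\alpha_i\|\leq\varepsilon$: its AP is the set of points $m/\alpha_i$ for integers $m$ in the range that can intersect $[1-\delta,N+\delta]$, so $d^{(i)}=1/|\alpha_i|$, an appropriate $a^{(i)}$, and $k^{(i)}=O(N|\alpha_i|)$ terms; the tolerance is $\varepsilon^{(i)}=\varepsilon/|\alpha_i|$. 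All parameters are rationals of polynomial bit length.

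The forward direction of the reduction is immediate: an SDA solution $q$ sets $x:=q$ and makes every $p_j$ vanish. For the converse, any common real zero $x$ of all pulses must (by $p_0$) lie within $\delta$ of some integer $q\in\{1,\ldots,N\}$, whence for each $i$ one gets $\|q\alpha_i\|\leq\|x\alpha_i\|+\delta|\alpha_i|\leq\varepsilon+\delta|\alpha_i|$. Thus the APM instance is a yes-instance iff a slightly relaxed SDA instance is a yes-instance.

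The main obstacle is the mismatch between the strict inequality in the definition of a pulse function and the slack $\delta|\alpha_i|$ introduced by allowing real (rather than integer) $x$. I would handle this by invoking a gap version of Lagarias's theorem: NP-hardness of distinguishing the yes-case $\|q\alpha_i\|\leq\varepsilon$ from the no-case ``for every $q\in\{1,\ldots,N\}$, some $\|q\alpha_i\|>\varepsilon(1+\gamma)$,'' for an inverse-polynomial gap $\gamma$. Such a gap form follows by rescaling the classical reduction. One then fixes $\delta$ small enough so that $\delta\max_i|\alpha_i|\leq\gamma\varepsilon$, and both directions of the correspondence hold cleanly, yielding the NP-hardness of Arithmetic Progression Meeting.
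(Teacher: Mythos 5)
Your reduction has the same skeleton as the paper's: both reduce from Lagarias's NP-hardness of Simultaneous Diophantine Approximation, both use one pulse function $p_0$ to enforce approximate integrality of $x$ in $\{1,\dots,N\}$, and both use a pulse function $p_i$ per $\alpha_i$ whose progression consists of the points $m/\alpha_i$ scaled so that $p_i(x)=0$ mimics $\|x\alpha_i\|\leq\varepsilon$. You have also correctly identified the sticking point: the strict inequality in the definition of a pulse function, combined with the rounding slack $\delta|\alpha_i|$ incurred when $x$ need only be near an integer, does not match the non-strict tolerance $\varepsilon$ in SDA.

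Where your proposal has a genuine gap is in the resolution of that mismatch. You invoke a ``gap version'' of Lagarias's theorem with an inverse-polynomial gap $\gamma$ and assert that it ``follows by rescaling the classical reduction,'' but this is left entirely unjustified; Lagarias's reduction does not hand you an inverse-polynomial gap in any obvious way, and without proving that claim your argument does not close. Moreover, even granting the gap promise, your forward direction is broken as written: you set $\varepsilon^{(i)}=\varepsilon/|\alpha_i|$, so a yes-instance $q$ with $\|q\alpha_i\|=\varepsilon$ exactly (which is perfectly possible for rationals) gives $|q-m/\alpha_i|=\varepsilon/|\alpha_i|$, which fails the strict inequality and makes $p_i(q)=1$; you would need to enlarge $\varepsilon^{(i)}$ slightly, but you do not say so. The paper's fix dissolves both problems at once and more cleanly: clear denominators so that all $\alpha_j$ and $\varepsilon$ have a common denominator $D$, and use tolerance $\varepsilon/\alpha_j+1/(2D)$ in $p_j$ and $1/(2D)$ in $p_0$. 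The added $1/(2D)$ lets the equality case pass through the strict inequality in the forward direction, and in the converse direction the rounding slack produces $|\hat q\alpha_j-i_j|<\varepsilon+1/D$; since that quantity has denominator $D$, strict inequality against $\varepsilon+1/D$ automatically improves to $\leq\varepsilon$. In other words, the ``gap'' you want is not a new hardness assumption but is already supplied by the discreteness of rationals with a fixed denominator -- a gap of size $1/D$ (inverse-exponential, not inverse-polynomial, but entirely sufficient and free). You should replace your appeal to a gap theorem by this elementary common-denominator argument and pick your pulse tolerances accordingly.
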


%
%
We are now in the position to prove that \textsc{Polygon Translation}
is $NP$-complete.
In our reduction to arithmetic progression meeting, we restrict ourselves to pulse functions whose
discontinuities lie in the open interval $(0,1)$. 
We can always reduce to this special case using an affine
transformation of the pulse functions,
so that this restriction is without loss of generality.

\begin{theorem}
  \label{thm:hardness-polygon-translation}
  Polygon Translation is $NP$-hard.
\end{theorem}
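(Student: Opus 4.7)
The plan is to reduce from \textsc{Arithmetic Progression Meeting} (APM), which is $NP$-hard by Theorem~\ref{thm:hardness-apm}. Given pulse functions $p_1,\dots,p_n$ whose discontinuities lie in $(0,1)$, I will build a single convex polygon $P$ whose integer-point count under a leftward translation by $t\in[0,1]$ equals $M+\sum_{j=1}^{n} p_j(t)$ for some integer constant $M$. Choosing the threshold $k=M$ then makes the translated polygon contain at most $M$ integer points if and only if $\sum_j p_j(t)=0$ for some $t\in[0,1]$, which, since each $p_j$ equals $1$ at $t\in\{0,1\}$, is equivalent to the APM instance being a yes-instance.

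The building block is Lemma~\ref{lemma:quadrilateral-for-pulse-function}: for pulse function $p_j$, a quadrilateral whose $k_j+1$ horizontal integer-height slices have the prescribed left/right fractional parts contributes $M^{(j)}+p_j(t)$ integer points. I stack $n$ such ``pulse regions'' vertically, with region $j$ occupying $y\in[Y_j,Y_j+k_j]$ and $Y_{j+1}=Y_j+k_j+1$, so that the one-unit transition strip between consecutive regions contains no integer row. Inside region $j$ the left boundary is a single straight segment of slope $s_j^L=d_j+m_j^L$ with $m_j^L\in\Z$ free, and the right boundary has slope $s_j^R=d_j+m_j^R$ similarly; condition~(4) of the lemma is automatic since the boundaries are linear in $y$. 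A row-by-row application of the proof of Lemma~\ref{lemma:quadrilateral-for-pulse-function} gives the promised integer-point count in each region, and because no integer row lies in the transition strips, summing over $j$ yields exactly $M+\sum_j p_j(t)$.

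The main obstacle is assembling these pieces into a single convex polygon. Convexity requires the slopes of the left boundary, in $y$-order, to be non-decreasing and those of the right boundary to be non-increasing, including the transition edges between pulse regions. Here the free integer offsets $m_j^L$, $m_j^R$ save the day: taking $m_j^L=Cj$ and $m_j^R=-Cj$ for a small integer $C>1+\max_j|d_j-d_{j+1}|$ makes the pulse-segment slopes strictly monotone. The transition edge from region $j$ to region $j+1$ spans height exactly one, so its slope $L(Y_{j+1})-L(Y_j+k_j)$ has a fractional part forced by the pulse parameters but an integer part controlled by the still-free integer part of the starting $x$-coordinate of region $j+1$. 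Since the interval $[s_j^L,s_{j+1}^L]$ has length at least~$1$, one can always place the transition slope inside it; the right-boundary argument is symmetric. Finally, choosing the initial width $R(Y_1)-L(Y_1)$ to be $\Theta(Cn^2\max_j k_j)$ keeps the polygon non-degenerate while remaining polynomial in the input size.

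With these choices all coordinates are computable in polynomial time from the APM instance, yielding a polynomial-time many-one reduction and hence $NP$-hardness. The most delicate step is the convexity bookkeeping, but once the slope intervals have length at least one the required integer offsets exist; beyond that, the only real verification is the row-wise count formula, which is immediate from the argument of Lemma~\ref{lemma:quadrilateral-for-pulse-function}.
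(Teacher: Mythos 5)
Your proposal is correct and is essentially the same reduction as the paper's: it reduces from \textsc{Arithmetic Progression Meeting}, builds one quadrilateral per pulse function via Lemma~\ref{lemma:quadrilateral-for-pulse-function}, and stacks them vertically with a one-unit (integer-free) gap between consecutive pieces, choosing integer offsets so that the left-edge slopes (in $dx/dy$) are strictly increasing and the right-edge slopes strictly decreasing. The only variation is cosmetic: you include explicit transition edges whose slopes you place in the length-$\geq 1$ interval between consecutive pulse-segment slopes, giving a $4n$-gon, whereas the paper defines $P$ directly as the intersection of the $2n+2$ half-planes supported by the quadrilateral edges (automatically convex) and instead verifies that consecutive edge lines intersect strictly between adjacent quadrilaterals -- the same degree of freedom, exercised at the level of the intersection point rather than of a separate chord.
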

\begin{proof}
  Let $p_1, \ldots, p_n : \R \to \{ 0,1 \}$ be an instance of Arithmetic Progression Meeting.
  We will assume without loss of generality that all discontinuities of the $p_j$
  lie in the open interval $(0,1)$.
  Let $p = \sum_{j=1}^n p_j$.
  The goal is to construct a convex polygon $P$ such that
  \[ | (t \begin{pmatrix} -1 \\ 0 \end{pmatrix} + P) \cap \Z^2 | = M + p(\{ t \}) \mbox{ for all } t \in \R \]
  for some $M \in \N$.
  Then $P$, $v = \begin{pmatrix} -1 \\ 0 \end{pmatrix}$, and $M$ form an instance
  of Polygon Translation which is a Yes-instance if and only if
  $p_1, \ldots, p_n$ is a Yes-instance for Arithmetic Progression Meeting.

  The idea for the construction of $P$ is straightforward.
  Lemma~\ref{lemma:quadrilateral-for-pulse-function} gives us a tool
  for constructing quadrilaterals $P_1, \ldots, P_n$ corresponding to the pulse functions $p_1, \ldots, p_n$,
  which we then stack vertically to form the polygon $P$ with $2n + 2$ edges.
  This is illustrated in Figure~\ref{fig:polygon-translation-stack}.
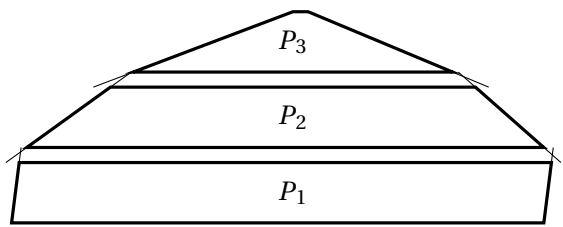
\begin{figure}
  \begin{center}
    \begin{tikzpicture}
      \draw[very thick] (0,0) -- (0.1,0.8) -- (7.1,0.8) -- (7.0,0) --cycle;
      \draw (0,0) -- (0.125,1.0);
      \draw (7,0) -- (7.124,1.0);
      \node at (3.7,0.4) {$P_1$};

      \draw[very thick] (0.2,1.0) -- (1.3,1.8) -- (6.1,1.8) -- (7.0,1.0) --cycle;
      \draw (-0.075,0.8) -- (1.575,2.0);
      \draw (5.875,2.0) -- (7.225,0.8);
      \node at (3.7,1.4) {$P_2$};

      \draw[very thick] (1.6,2.0) -- (3.7,2.8) -- (3.9,2.8) -- (5.8,2.0) -- cycle;
      \draw (1.075,1.8) -- (3.7,2.8);
      \draw (3.9,2.8) -- (6.275,1.8);
      \node at (3.7,2.4) {$P_3$};
    \end{tikzpicture}
  \end{center}
  \caption{The construction of $P$ in the proof of Theorem~\ref{thm:hardness-polygon-translation}.}
  \label{fig:polygon-translation-stack}
\end{figure}

  Formally, we define the polygon $P$ as the convex polygon constrained by $2n+2$ lines:
  the $2n$ lines defined by the left and right edges of the $P_j$,
  and the lines through the bottom edge of $P_1$ and the top edge of $P_n$.
  We will argue that, given a proper choice of coordinates for the $P_j$,
  the resulting polygon $P$ satisfies the following properties:
  \begin{enumerate}
    \item $P$ has $2n+2$ edges: the bottom edge of $P_1$, the top edge of $P_n$,
      and $n$ edges each on the left and right sides,
      which are extensions of the left and right edges of the $P_j$,
      respectively.
    \item $P$ has $2n+2$ vertices: the two bottom vertices of $P_1$,
      the two top vertices of $P_n$,
      and $2(n-1)$ vertices which are obtained as the intersection points of
      lines through the left or right edges of adjacent $P_j$.
    \item Any horizontal translate of $P$ contains exactly the same integer points as
      the union of the corresponding translates of the $P_j$.
  \end{enumerate}
  The last property is the result that we really need for the reduction.
  The first two properties merely guide us along during the proof.

  We will use the same notation as in Lemma~\ref{lemma:quadrilateral-for-pulse-function},
  but with superscripts indicating which polygon $P_j$ we are talking about.
  We choose $y_1^{(j+1)} = y_2^{(j)} + 1$ for all $j = 1 \ldots n-1$,
  and $y_2^{(j)} = y_1^{(j)} + k^{(j)}$ for all $j = 1 \ldots n$.
  We then choose
  \begin{align*}
    \lfloor \ell_2^{(j)} \rfloor &= \lfloor \ell_1^{(j)} \rfloor + k^{(j)} \cdot (3j) \\
    \lfloor \ell_1^{(j+1)} \rfloor &= \lfloor \ell_2^{(j)} \rfloor + 3j + 2
  \end{align*}
  The fractional part of the $\ell_i^{(j)}$
  is chosen to satisfy the conditions of Lemma~\ref{lemma:quadrilateral-for-pulse-function}.
  Observe that this fixes the $y_i^{(j)}$ and $\ell_i^{(j)}$ up to an integer translation of $P$.

  Let us describe how our choice of the $\lfloor \ell_i^{(j)} \rfloor$
  establishes the first two properties listed above on the left side of $P$.
  Observe that the slopes of the left edges of the $P_j$
  are strictly decreasing: the slope of the left edge of $P_{j+1}$ is always strictly less
  than the slope of the left edge of $P_j$.

  Furthermore, we claim that the lines through the left edges of $P_j$ and $P_{j+1}$
  intersect in a point $(x,y)$ with $y_2^{(j)} < y < y_1^{(j+1)}$;
  that is, they intersect ``between'' $P_j$ and $P_{j+1}$.
  To see this, consider the point $(x',y')$ where the line through the left edge of $P_j$ intersects the
  horizontal line $y = y_1^{(j+1)}$.
  By our choice of the $\lfloor \ell_i^{(j)} \rfloor$,
  we know that
\[ x' < \ell_2^{(j)} + 3j + 1 < \lfloor \ell_2^{(j)} \rfloor + 3j + 2 = \lfloor \ell_1^{(j+1)} \rfloor. \]
  On the other hand, if $(x'',y'')$ is the intersection of the line through the left edge of $P_{(j+1)}$
  with the horizontal line $y' = y_2^{(j)}$,
  then
\[ x'' < \ell_1^{(j+1)} - 3(j+1) < \lfloor \ell_1^{(j+1)} \rfloor - 3j - 2 = \lfloor \ell_2^{(j)} \rfloor. \]
  This establishes the claim. The situation is illustrated in Figure~\ref{fig:left-edge-intersection}.
\begin{figure}
  \begin{center}
    \begin{tikzpicture}
      \foreach \p in {(1,0),(2,0),(3,0),(10,1),(11,1),(12,1)}
          \draw[help lines] \p   +(-2pt,-2pt) -- +(2pt,2pt) +(-2pt,2pt) -- +(2pt,-2pt);
      \draw[dotted] (5.5,0.5) -- (7.5,0.5);

      \draw[dotted] (0.5,0) -- (12.5,0) node[right] {$y = y_2^{(j)}$};
      \draw[dotted] (0.5,1) -- (12.5,1) node[right] {$y = y_1^{(j+1)}$};

      \draw[thick] (2.8,0) -- (5.3,0.3);
      \fill (2.8,0) circle (2pt) node[below] {$\ell_2^{(j)}$};
      \draw[help lines] (2,-0.3) -- (2,0);
      \node[below] at (2,-0.3) {$\lfloor \ell_2^{(j)} \rfloor$};
      \draw[thick] (10.3,1) -- (7.8,0.7);
      \fill (10.3,1) circle (2pt) node[above] {$x'$};

      \draw[thick] (1.1,0) -- (4.6,0.3);
      \draw[thick] (11.9,1) -- (8.4,0.7);
      \fill (11.9,1) circle (2pt) node[above] {$\ell_1^{(j+1)}$};
      \draw[help lines] (11,1.5) -- (11,1);
      \node[above] at (11,1.5) {$\lfloor \ell_1^{(j+1)} \rfloor$};
      \fill (1.1,0) circle (2pt) node[below] {$x''$};
    \end{tikzpicture}
  \end{center}
  \caption{Intersection of the lines through the left edges of $P_j$ and $P_{j+1}$
    in the proof of Theorem~\ref{thm:hardness-polygon-translation}.}
  \label{fig:left-edge-intersection}
\end{figure}
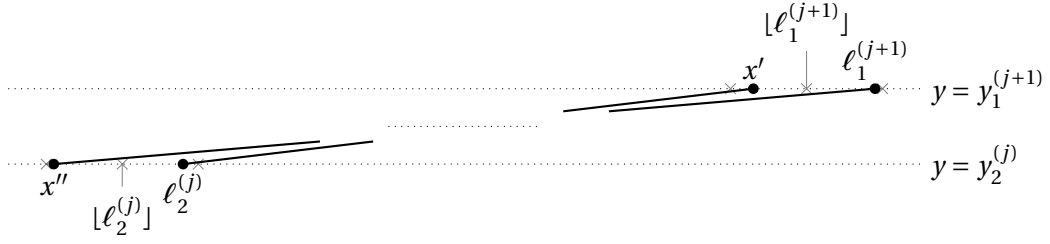

  An analogous choice of coordinates is used for the right edges of the $P_j$,
  to obtain negative slopes with strictly decreasing absolute value,
  and well-placed intersection points of the lines through right edges of adjacent $P_j$.

  Together, these properties imply that the boundary of $P$
  is indeed an extension of the boundaries of the $P_j$.
  In particular, every horizontal slice of $P$ with integer vertical coordinates
  coincides with a slice of one of the $P_j$, and vice versa.
  This establishes also the third property listed above.
  Hence
  \[
    | t \begin{pmatrix} -1 \\ 0 \end{pmatrix} + P | =
     \sum_{j=1}^n | t \begin{pmatrix} -1 \\ 0 \end{pmatrix} + P_j | =
     \sum_{j=1}^n (M^{(j)} + p_j(\{ t \})) =
     \underbrace{\left( \sum_{j=1}^n M^{(j)} \right)}_{=: M} + p(\{ t \})
  \]
  for all $t \in \R$, which completes the reduction.
\end{proof}

\subsection{Simultaneous Diophantine approximation and arithmetic
  progression meeting}
\label{sec:simult-dioph-appr}

Dirichlet's theorem is is a classical result in number theory
about the approximation of a vector of real numbers using rational numbers of equal denominator,
see e.g.~\cite[chapter 11]{MR2445243}.
It states that given $\alpha_1$, \dots, $\alpha_n \in \R$ and $Q \in \N$,
there exist $q \in \{ 1, \ldots, Q \}$ and $p_1$, \dots, $p_n \in \Z$ such that
\[ | \alpha_j - \frac{p_j}{q} | \leq \frac{1}{qQ^{1/n}} \qquad \mbox{ for all } j = 1\dots n \]
This result is best possible up to constants for the general case.
However, we may ask whether a better approximation is possible for a specific set of numbers.
This motivates the following decision problem,
which was shown to be $NP$-hard by Lagarias~\cite{Lagarias85}.
\begin{quote}
  \textsc{Simultaneous Diophantine Approximation}

  \begin{tabular}{ll}
    Given:& $\alpha_1,\ldots,\alpha_n\in\Q$, $Q\in\N$, $\varepsilon>0$ \\
    Decide:& $\exists q\in\{1,\ldots,Q\}$
    such that $|q\alpha_j - \lceil q\alpha_j \rfloor | \leq \varepsilon$ for all $j = 1\dots n$
  \end{tabular}
\end{quote}
We use $\lceil \cdot \rfloor$ to denote the nearest integer (breaking ties by rounding down).
For computational purposes, the $\alpha_j$ must be rational numbers,
though their denominators will typically be much larger than $Q$.

\begin{proof}[Proof of Theorem~\ref{thm:hardness-apm}]
  Let $\alpha_1, \ldots, \alpha_n \in \Q$, $Q \in \N$, $\varepsilon > 0$
  be an instance of Simultaneous Diophantine Approximation.
  We assume without loss of generality that $\alpha_j \in (0,1)$.

  We will define an instance of Arithmetic Progression Meeting with
  pulse functions $p_0$, $p_1$, \dots, $p_n$ as follows.
  We scale numerators and denominators so that the
  denominators of the $\alpha_j$ and $\varepsilon$ are all
  equal and we denote their common denominator by  $D$. 
  For every $j = 1\dots n$,
  let
\[
  p_j(x) = \begin{cases}
             0 & |x - \frac{i}{\alpha_j}| < \frac{\varepsilon}{\alpha_j} + \frac{1}{2D}
                 \mbox{ for some } i \in \{ 0, 1, \ldots, \lceil Q \alpha_j \rfloor \} \\
             1 & \mbox{else}
           \end{cases}
\]
  The intuition behind this definition is that we would like to have $p_j(x) = 0$ if and only if
  $| x \alpha_j - \lceil x\alpha_j\rfloor | \leq \varepsilon$ as in the definition of simultaneous Diophantine approximation.
  The correction term $\frac{1}{2D}$ is needed due to the strict inequality required in pulse functions.
  Furthermore, we define
\[
  p_0(x) = \begin{cases}
             0 & |x - i| < \frac{1}{2D}
                 \mbox{ for some } i \in \{ 1, 2, \ldots, Q \} \\
             1 & \mbox{else}
           \end{cases}
\]
  It remains to be shown that the original instance of Simultaneous Diophantine Approximation is a Yes-instance
  if and only if $p = \sum_{j=0}^n p_j$ has a root.

  Suppose $q \in \{ 1, \ldots, Q \}$ satisfies $| q\alpha_j - \lceil q\alpha_j \rfloor | \leq \varepsilon$
  for all $j = 1 \ldots n$. Then dividing by $\alpha_j$
  yields
  \[ | q - \frac{ \lceil q\alpha_j \rfloor }{ \alpha_j } | \leq \frac{\varepsilon}{\alpha_j} \]
  and hence one obtains $p(q) = 0$.

  Conversely, suppose that $p(q) = 0$.
  Define $\hat q := \lceil q \rfloor$.
  As $p_0(q) = 0$, we have $\hat q \in \{ 1, \ldots, Q \}$.
  Furthermore, $|q - \hat q| < \frac{1}{2D}$. Let
  $i_j\in\{0,\ldots,\lceil Q\alpha_j\rfloor\}$ with $|q - i_j / \alpha_j| < \varepsilon/\alpha_j +
  1/(2D)$. Then
  \begin{eqnarray*}
    |q - i_j / \alpha_j| & = & |\hat q + q - \hat q - i_j / \alpha_j| \\
                & \geq & |\hat q  - i_j / \alpha_j| - |q - \hat q| \\
                & \geq & |\hat q  - i_j / \alpha_j| - 1/(2D) \\
  \end{eqnarray*}
From this, it follows that
\begin{displaymath}
  |\hat q  - i_j / \alpha_j| < \varepsilon / \alpha_j + 1/D
\end{displaymath}
and since $\alpha_j \in (0,1)$ also that
\begin{displaymath}
  |\hat q \alpha_j   - i_j | < \varepsilon  + 1/D.
\end{displaymath}
Since the denominator of $\alpha_j$ and $\varepsilon$ is $D$ one has
\begin{displaymath}
  |\hat q \alpha_j   - i_j | \leq \varepsilon
\end{displaymath}
which shows that $\hat{q}$ is a solution to simultaneous Diophantine
approximation.

\end{proof}

\section{A polynomial time approximation scheme}

In this section, we show the following theorem,
which implies a polynomial time approximation scheme for the
\textsc{Polygon Translation} problem.
\begin{theorem}
  \label{thm:ptas}
  For every $k \in \N$,
  there is a polynomial-time algorithm which,
  given a polygon $P$ and a direction $v \in \Z^2$ as input,
  either computes a translate $tv + P$ containing a minimal number of integer points
  or asserts that every translate of $P$ is a $(1 + 1/k)$-approximation to the optimal solution.
\end{theorem}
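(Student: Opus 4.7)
The strategy is a lattice-width dichotomy. Fix a threshold $w_0=w_0(k)=\Theta(k^2)$, and first compute in polynomial time the lattice width $w$ of $P$ together with a primitive direction $d\in\Z^2$ attaining it; in two dimensions this is cheap, for instance via Gauss reduction on a suitable lattice. The algorithm then branches on whether $w\geq w_0$ (\emph{fat} case) or $w<w_0$ (\emph{thin} case).

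In the fat case I would show that every translate is already a $(1+1/k)$-approximation, so the algorithm may simply output an arbitrary one. The crucial ingredient is a lattice-width discrepancy estimate of the form
\[
  \bigl|\,\abs{(tv+P)\cap\Z^2}-\vol(P)\,\bigr|\;\leq\;c_1\frac{\vol(P)}{w}+c_2 w,
\]
valid for every translation parameter $t$ since area is translation-invariant. I would derive this by combining the classical perimeter-based discrepancy bound $\abs{K\cap\Z^2}=\vol(K)+O(\op{perim}(K))$ with two geometric consequences of the lattice-width hypothesis: a perimeter upper bound $\op{perim}(P)\leq O(\vol(P)/w+w)$ (a convex body of lattice width $w$ is squeezed between two parallel integer-direction lines at distance $w$, so its perpendicular extent is at least $\vol(P)/w$) and an area lower bound $\vol(P)\geq\Omega(w^2)$. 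Picking $w_0$ of order $k^2$ then makes the right-hand side of the displayed inequality at most $\vol(P)/(2k+1)$, which is exactly what is needed to bound the ratio of the maximum and minimum translate counts by $1+1/k$.

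In the thin case I would follow Kannan's slicing idea. Extending $d$ to a unimodular basis yields a matrix $U\in GL_2(\Z)$ whose first row is $d^\transpose$; then $P':=UP$ lies in a horizontal strip of height less than $w_0$, lattice-point counts are preserved (since $U$ is unimodular), and the translation direction becomes $v':=Uv\in\Z^2$. Because $v'$ is integer, the function $t\mapsto\abs{(tv'+P')\cap\Z^2}$ is periodic with period $1/\gcd(v'_1,v'_2)$, so it suffices to search $t$ within one period. At most $O(w_0)$ integer horizontal lines intersect $P'$, and on each such line the intersection with $tv'+P'$ is a segment whose endpoints depend piecewise linearly on $t$; the integer-point count on the segment changes only at the $O(\norm{v'}_\infty)$ values of $t$ in one period at which an endpoint crosses an integer. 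Enumerating all these breakpoints yields a polynomial list of candidate values of $t$; evaluating the total count at each and taking the minimum gives an exact optimal translate.

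The main obstacle will be establishing the discrepancy estimate in the fat case, since the standard perimeter-based bound does not, by itself, scale with $1/w$; one must carefully exploit the geometric fact that a large lattice width simultaneously forces the area to grow quadratically in $w$ and the perimeter to grow at most linearly in $\vol(P)/w$. A secondary technical point is to guarantee that $\norm{U}$, and hence $\norm{v'}_\infty$, stays polynomial in the input, which reduces to choosing the flat direction $d$ of polynomially bounded norm -- something Gauss reduction automatically provides.
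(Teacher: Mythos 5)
Your high-level plan---a dichotomy on the lattice width of $P$, with a discrepancy bound in the wide case and Kannan-style slicing in the thin case---is exactly the structure of the paper's proof. However, both branches of your argument contain real gaps.

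\textbf{Wide case.} The proposed perimeter estimate $\op{perim}(P)\leq O(\vol(P)/w + w)$ is false. Lattice width and area are invariant under unimodular shears of $\R^2$, but the Euclidean perimeter is not. Take the disk $B(0,k)$, which has area $\pi k^2$ and lattice width $2k$, and apply the shear $\smat{1 & M \\ 0 & 1}$: the image is an ellipse with the same area and lattice width but perimeter $\Theta(kM)$, which is unbounded as $M\to\infty$, while your claimed bound would give $O(k)$. Consequently, the chain ``classical perimeter discrepancy $\Rightarrow$ perimeter bounded by $\vol/w+w$'' does not deliver a discrepancy of the form $O(\vol/w)$, and your constant $w_0$ cannot be chosen to make the argument close. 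The paper avoids this by not using the Euclidean perimeter at all: it applies a general (not unimodular) linear map sending the L\"owner--John ellipsoid of $K$ to a disk, so that $\op{perim}(K)=O(1)$ in the new coordinates, tracks the transformed lattice $\Lambda$, and tiles with the reduced fundamental parallelepiped of $\Lambda$, whose diameter is $O(1/\|b_1\|)=O(1/w)$; this yields the unimodular-invariant bound of Theorem~\ref{thm:wide-polygons}.

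\textbf{Thin case.} The breakpoint count ``$O(\norm{v'}_\infty)$ per period per slice'' is incorrect, and even if it were correct it would not give a polynomial algorithm. For the endpoint $a_i(t)=tv_2'+\alpha(i-tv_1')$, the number of integer crossings over one period is governed by the total variation of $a_i$, which can be as large as the extent of $P'$ in the $x_2$-direction. Concretely, take $P'$ with vertices $(0,0),(1,H),(2,0)$ and $v'=(1,0)$; the upper endpoint of the slice $x_1=1$ is $H(1-t)$, which crosses $H$ integers over one period while $\norm{v'}_\infty=1$. Since $H$ (and more generally the entries of $P$, $U$, $v'$) are given in binary, this count is exponential in the input size, so enumerating the breakpoints is not a polynomial-time procedure. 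The paper sidesteps the enumeration entirely: it partitions $[0,1)$ into only $n$ intervals according to which edge each slice endpoint lies on (one combinatorial change per vertex), and inside each interval it encodes the floor/ceiling computations and the objective as an integer program with $O(k)$ integer variables (Lemma~\ref{lemma:thin-polygon-exact-solution}), solved by Kannan's fixed-dimension IP algorithm. That is the ingredient you are missing: replacing the explicit enumeration of integer-crossing events by integer programming in fixed dimension.

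A minor point: the period of $t\mapsto|(tv'+P')\cap\Z^2|$ and the rest of the setup are fine, and your observation that the lattice width, a reduced basis, and the width direction $d$ can be computed in polynomial time in dimension two is also fine; the trouble is concentrated exactly in the two estimates above.
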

The intuition behind this result is that when $P$ is small,
we can use integer programming in fixed dimension and adapt a technique of Kannan~\cite{MR1105115}
to find an optimum.
On the other hand, if $P$ is large and contains many lattice points,
then only a small fraction of them is close to the boundary,
and hence the \emph{discrepancy} relative to the average number of lattice points
that one expects based on the area of $P$ is small.

This idea goes back to Gauss' circle problem.
Gauss showed that as $r$ grows, the number of integer points $L(r)$ in a disk of radius $r$
is asymptotically equal to its area, $\pi r^2$.
In fact, he gave a bound on the error term $|L(r) - \pi r^2|$ that is linear in $r$.
The heart of the argument lies in counting unit squares intersecting the disk
and showing that only $O(r)$ of them lie near the boundary.
Observe that if one transforms the setting of Gauss' circle problem
by a linear map,
the disk becomes an ellipse $E$ and $\Z^2$ becomes a general lattice $\Lambda$.
Instead of unit squares we now count fundamental parallelepipeds of $\Lambda$;
the trick is to use the right parallelepiped.

The dual of a lattice $\Lambda$ is
$\Lambda^\star = \{ y \in \R^2 ~:~ \forall x\in\Lambda :~ y^Tx \in \Z \}$.
The \emph{width} $w_y(K)$ of a convex body $K$
along a dual lattice vector $y \in \Lambda^\star \setminus \{ 0 \}$
is defined as
\[ w_y(K) := \max_{x\in K} y^T x - \min_{x \in K} y^Tx. \]
The \emph{lattice width} $w(K)$ of $K$ is the minimum over all choices of $y \in \Lambda^\star \setminus \{ 0 \}$.
The lattice width and the corresponding dual lattice vector
can be computed efficiently in fixed dimension.
Note that in the linear transformation of Gauss' circle problem,
the diameter of the disk becomes the lattice width of the ellipse $E$.
In fact, the following theorem implies the discrepancy bound of Gauss.

\begin{theorem}
  \label{thm:wide-polygons}
  For every lattice $\Lambda \subset \R^2$ and convex body $K \subseteq \R^2$ with lattice width at least $k \geq 1$,
  one has $|N - \frac{\vol(K)}{\det(\Lambda)}| \leq \frac{3}{2k} \frac{\vol(K)}{\det(\Lambda)}$,
  where $N = |K \cap \Lambda|$.
\end{theorem}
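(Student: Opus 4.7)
The plan is to follow the classical Gauss strategy, adapted to a general lattice. First I would reduce to the case $\Lambda = \Z^2$ by applying the linear transformation $T : \R^2 \to \R^2$ that maps some basis of $\Lambda$ to the standard basis. This map preserves $N$, preserves $\vol(K)/\det(\Lambda)$, and also preserves the lattice width (since its inverse transpose sends $\Lambda^\star$ bijectively to $\Z^2$), so nothing is lost by replacing $K$ with $TK$ and $\Lambda$ with $\Z^2$.

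With this reduction, I would tile the plane by fundamental cells $\Pi_v = v + [0,1)^2$ for $v \in \Z^2$, and classify each cell as \emph{interior} if $\Pi_v \subseteq K$, as \emph{boundary} if $\Pi_v$ meets $\partial K$, and as \emph{exterior} otherwise. Write $N_{\mathrm{in}}$ and $N_{\mathrm{bd}}$ for the first two counts. The fact that $v \in \Pi_v$ for every $v$, combined with the volume sandwich obtained by summing over the tiling, yields
\[
  N_{\mathrm{in}} \leq N \leq N_{\mathrm{in}} + N_{\mathrm{bd}}
  \qquad \text{and} \qquad
  N_{\mathrm{in}} \leq \vol(K) \leq N_{\mathrm{in}} + N_{\mathrm{bd}},
\]
so that $|N - \vol(K)| \leq N_{\mathrm{bd}}$. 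The task thus reduces to bounding the number of unit cells that meet $\partial K$.

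To bound $N_{\mathrm{bd}}$ I would exploit convexity: split $\partial K$ at the four extreme points in the $\pm e_1, \pm e_2$ directions into four arcs, each monotone in both coordinates. A monotone arc with horizontal extent $W$ and vertical extent $H$ enters at most $\lfloor W \rfloor + \lfloor H \rfloor + O(1)$ unit cells, since every cell change corresponds to crossing a grid line and the arc crosses at most $\lfloor W \rfloor + 1$ vertical and $\lfloor H \rfloor + 1$ horizontal grid lines. Summing over the four arcs, the horizontal (resp.\ vertical) extents telescope to $2 w_{e_1}(K)$ (resp.\ $2 w_{e_2}(K)$), giving
\[
  N_{\mathrm{bd}} \leq 2 \bigl( w_{e_1}(K) + w_{e_2}(K) \bigr) + O(1).
\]
Because $e_1, e_2 \in \Z^2 \setminus \{0\} = \Lambda^\star \setminus \{0\}$, the hypothesis gives $w_{e_j}(K) \geq k$ for $j = 1, 2$.

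The hard step is the last one: converting the bound just obtained into the claimed factor $\frac{3}{2k}$. I expect this to require a sharp lower bound on $\vol(K)$ in terms of $k$ and the two axis-widths, roughly of the form $\vol(K) \geq c \cdot k \max\{w_{e_1}(K), w_{e_2}(K)\}$, which one proves by inscribing in $K$ a triangle whose base realizes the larger axis-width and whose altitude is at least $k$ in the transverse lattice direction. Reaching the precise constant $3/2$ likely demands choosing the basis of $\Lambda$ adapted to the direction $y^\star \in \Lambda^\star$ realizing the lattice width (so that one coordinate width is tight at $k$) and then running a Minkowski-type argument on the dual to force the product $w_{e_1}(K) \, w_{e_2}(K)$ to be comparable with $\vol(K)$.
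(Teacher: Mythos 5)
Your first two steps (reduce to $\Lambda = \Z^2$, tile by unit cells, bound $|N - \vol K|$ by the number $N_{\mathrm{bd}}$ of boundary cells via four monotone arcs) are correct and are a legitimate alternative to the paper's formulation. But the ``hard step'' you flag is where the argument genuinely breaks, and the direction you sketch cannot close it. Reducing to $\Z^2$ fixes the lattice while leaving the shape of $K$ completely uncontrolled, and the resulting bound $N_{\mathrm{bd}} \leq 2(w_{e_1}(K) + w_{e_2}(K)) + O(1)$ can exceed $\frac{3}{2k}\vol K$ by an arbitrarily large constant factor. Concretely, take $K = \{x : |x_1 + x_2| \leq k/2,\ |x_1 - x_2| \leq b\}$ with $b \gg k$: one checks that $w_y(K) = \frac{k}{2}|y_1+y_2| + b|y_1-y_2|$, so $w(K) = k$ is realized at $y = (1,1)$, while $w_{e_1}(K) = w_{e_2}(K) = b + k/2$ and $\vol K = kb$. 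Your bound gives $N_{\mathrm{bd}} \approx 4b$, yet $\frac{3}{2k}\vol K = \frac{3b}{2}$. The inscribed-triangle estimate $\vol K \geq c\, k \max\{w_{e_1}, w_{e_2}\}$ can only hold with $c \leq 1/2$ (the right triangle with legs $W$ and $k$ shows this), which plugged into $N_{\mathrm{bd}} \leq 4\max\{w_{e_1},w_{e_2}\} + O(1)$ yields the wrong sign; and no unimodular change of basis can make a thin diagonal body axis-aligned, so ``choosing the basis adapted to the width direction'' does not rescue the estimate.

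What the paper does differently, and what your sketch is missing, is to normalize the \emph{body} and not only the lattice. It applies a linear map that turns the L\"owner--John ellipsoid of $K$ into a unit disk, so that $B(z,1) \subseteq K \subseteq B(z,2)$ and in particular $|\partial K| \leq 4\pi$; the lattice then becomes a general $\Lambda$, and the fundamental cell is taken with respect to a \emph{reduced} basis of $\Lambda^\star$, whose radius is $O(1/\|b_1\|) = O(1/k)$ by Lemma~\ref{lemma:diameter-of-parallelepiped} combined with $4\|b_1\| \geq w(K) \geq k$. The discrepancy is then controlled by the area of a thin collar around $\partial K$, namely $|\partial K|\,\delta + \pi\delta^2$, which is $O(1/k)$, while $\vol K \geq \pi$. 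In your setup, by contrast, the cell size is fixed at $1$ while $|\partial K|$ is unbounded, so the collar is not small relative to $\vol K$. The essential ingredient is simultaneously bounding $|\partial K|$ (via L\"owner--John) and shrinking the cell (via basis reduction); either one alone is not enough, and your argument has neither.
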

\begin{proof}
  Using a linear transformation,
  we can assume that the Löwner-John ellipsoid~\cite{MR1491097} of $K$ is a unit disk centered at some $z \in \R^2$:
  \[ B(z,1) \subseteq K \subseteq B(z,2) \]
  Let $B = (b_1,b_2) $ be a reduced basis of the dual lattice $\Lambda^\star$.
  In other words, $\|b_1\| \leq \|b_2\|$ and its Gram-Schmidt orthogonalization satisfies
  \begin{equation}
    \label{eq:gram-schmidt-conditions}
    b_2 = b_2^\star + \mu b_1, \, b_2^\star \bot b_1, \, |\mu| \leq \frac{1}{2}.
  \end{equation}
  It is well known that $b_1$ is a shortest non-zero lattive vector.
  Thus it is the lattice width direction of disks.
  In particular, $4\|b_1\| = w(B(z,2)) \geq w(K) \geq k$.
  Let
  \[
    \cP = \{ x \in \R^2 ~:~ 0 \leq b_1^T x < 1 \text{ and } 0 \leq b_2^T x < 1 \}
  \]
  be the fundamental parallelepiped of $\Lambda$ associated to $B$ and let $\overline{\cP}$ be its closure.
  We relate the lattice points in $K$ to the area of $K$ by centering one copy of $\overline{\cP}$
  at each lattice point, see Fig.~\ref{fig:parallelepiped-tiling}:
  \[
    \cP(K) = (K \cap \Lambda) ± \frac{1}{2} \overline{\cP} = \bigcup_{x \in K \cap \Lambda} x ± \frac{1}{2} \overline{\cP}.
  \]
  Let $R = \partial K ± \frac{1}{2} \overline{\cP}$.
  We have
  \[ K \setminus R \subseteq \cP(K) \subseteq K \cup R. \]
  Let $\delta > 0$ be the radius of $\cP$, i.e. $\cP$ is contained in a disk of radius $\delta$.
  Since $K$ is a convex body, we can estimate
  \[
    \vol(R \setminus K) \leq \vol((K + B(0,\delta)) \setminus K) \leq |\partial K| \delta + \pi\delta^2,
  \]
  where $|\partial K|$ is the length of the boundary.
  By Lemma~\ref{lemma:diameter-of-parallelepiped} below,
  we have $\delta \leq \frac{1.2}{\|b_1\|} \leq \frac{1}{3k}$.
  Furthermore, $|\partial K| \leq |\partial B(z,2)| = 4\pi$ because $K$ is convex and contained in $B(z,2)$.
  \[
    \vol(R \setminus K) \leq \frac{4\pi}{3} \frac{1}{k} + \frac{\pi}{9k} \frac{1}{k} \leq \frac{3}{2k} \vol(K),
  \]
  where the last inequality follows from $\vol(K) \geq \vol(B(z,1)) = \pi$ and $k \geq 1$.
  It follows that
  \[ \vol(\cP(K)) \leq \vol(K) + \vol(R \setminus K) \leq \vol(K) + \frac{3}{2k} \vol(K). \]
  A lower bound follows from an analogous argument,
  and combining these inequalities with
  $\vol(\cP(K)) = N \cdot \vol(\cP) = N \cdot \det(\Lambda)$
  yields the statement of the Theorem.
\end{proof}

\begin{figure}
  \begin{center}
    \begin{tikzpicture}
      \def\drawP#1{
        \draw[fill=black!10] #1 +(0.1,0.633) -- +(-0.5,0.533) -- +(-0.1,-0.633) -- +(0.5,-0.533) --cycle;
      }
      \foreach \y / \x in {
        0/3,0/4,0/5,
        1/1,1/2,1/3,1/4,1/5,1/6,
        2/3,2/4,2/5,2/6
      } {
        \drawP{($(-2,-1.8) + \y *(-0.4,1.166) + \x *(0.6,0.1)$)}
      }

      \draw[thick] (-2,-1.5) -- (1,-1.5) -- (1.6, -0.5) -- (1.675,0.15) -- (0.55,2.05) --
        (-0.8,1.616) -- (-1.8,-0.116) --cycle;


      \foreach \y in {0,1,2,3}
        \foreach \x in {0,1,2,...,8}
          \draw[help lines] ($(-2,-1.8) + \y *(-0.4,1.166) + \x
          *(0.6,0.1)$)  circle (1pt);


    \end{tikzpicture}
  \end{center}
  \caption{The set $\cP(K)$ from the proof of Theorem~\ref{thm:wide-polygons}.}
  \label{fig:parallelepiped-tiling}
\end{figure}

\begin{lemma}
  \label{lemma:diameter-of-parallelepiped}
  Let $B = (b_1,b_2)$ be a reduced basis of $\Lambda^\star$
  and $\cP = \{ x \in \R^2 ~:~ 0 \leq b_1^T x < 1 \text{ and } 0 \leq b_2^T x < 1 \}$
  the associated fundamental parallelepiped of $\Lambda$.
  Then the diameter $d$ of $\cP$ is bounded by $d \leq \frac{2.4}{\|b_1\|}$.
\end{lemma}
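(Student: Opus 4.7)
The plan is to pass to the basis of $\Lambda$ dual to $(b_1,b_2)$, reducing the problem to bounding the longer diagonal of the resulting parallelepiped in terms of the Gram--Schmidt data of $(b_1,b_2)$.

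Let $(c_1,c_2)$ denote the basis of $\Lambda$ defined by $b_i^T c_j = \delta_{ij}$. Writing $x = \alpha_1 c_1 + \alpha_2 c_2$ shows that the condition $0 \leq b_i^T x < 1$ is equivalent to $0 \leq \alpha_i < 1$, so $\overline{\cP}$ is the parallelepiped with vertices $0$, $c_1$, $c_2$, $c_1+c_2$. Its diameter is therefore $\max(\|c_1+c_2\|, \|c_1-c_2\|)$.

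Next I would express $c_1, c_2$ in the orthonormal basis $(e_1, e_2)$ coming from the Gram--Schmidt orthogonalization \eqref{eq:gram-schmidt-conditions}, namely $e_1 = b_1/\|b_1\|$ and $e_2 = b_2^\star/\|b_2^\star\|$. In these coordinates $b_1 = \|b_1\| e_1$ and $b_2 = \mu\|b_1\| e_1 + \|b_2^\star\| e_2$, and solving the duality equations $b_i^T c_j = \delta_{ij}$ gives
\[
  c_1 = \frac{1}{\|b_1\|} e_1 - \frac{\mu}{\|b_2^\star\|} e_2, \qquad
  c_2 = \frac{1}{\|b_2^\star\|} e_2.
\]
A direct computation then yields
\[
  \|c_1 \pm c_2\|^2 = \frac{1}{\|b_1\|^2} + \frac{(1 \mp \mu)^2}{\|b_2^\star\|^2}.
\]

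Finally I would invoke the two reducedness consequences. From \eqref{eq:gram-schmidt-conditions} one has $|\mu| \leq 1/2$, so $(1+|\mu|)^2 \leq 9/4$. From the fact that $b_1$ is a shortest nonzero vector, $\|b_1\|^2 \leq \|b_2\|^2 = \|b_2^\star\|^2 + \mu^2\|b_1\|^2$, giving $\|b_2^\star\|^2 \geq (1-\mu^2)\|b_1\|^2 \geq (3/4)\|b_1\|^2$. Substituting,
\[
  d^2 \leq \frac{1}{\|b_1\|^2} + \frac{9/4}{(3/4)\|b_1\|^2} = \frac{4}{\|b_1\|^2},
\]
hence $d \leq 2/\|b_1\| \leq 2.4/\|b_1\|$. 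There is no real obstacle here; the only mildly delicate point is remembering that $\cP$ is a fundamental domain of $\Lambda$ (not $\Lambda^\star$), so its edges are the dual vectors $c_i$, not the $b_i$ themselves, and the reducedness bounds must then be transferred through the Gram--Schmidt to control $\|c_1\pm c_2\|$.
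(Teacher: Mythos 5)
Your proof is correct, and in fact it proves a slightly sharper bound than the lemma states. The paper's proof takes the two vertices $x,y$ of $\cP$ adjacent to the origin (your $c_2$ and $c_1$), bounds each by $\|b_1\|^{-1}\sqrt{4/3}$ using the Gram--Schmidt decomposition, and then applies the triangle inequality $d\leq\|x\|+\|y\|$ to get $d\leq 2\sqrt{4/3}/\|b_1\|\approx 2.31/\|b_1\|$, rounded to $2.4$. You instead observe that the diameter of a parallelogram is exactly the length of its longer diagonal $\max(\|c_1+c_2\|,\|c_1-c_2\|)$, compute that diagonal in the orthogonal coordinates $(e_1,e_2)$, and then apply the same two reducedness facts ($|\mu|\leq 1/2$ and $\|b_2^\star\|^2\geq(3/4)\|b_1\|^2$). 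This is tighter than the triangle inequality because $c_1$ and $c_2$ are not collinear, and yields $d\leq 2/\|b_1\|$. Both proofs use identical structural ingredients --- the dual vertices of $\cP$ expressed via Gram--Schmidt, and the two consequences of reducedness --- so the approaches are close kin; the only real difference is that you avoid the lossy triangle inequality step. Either constant suffices for the downstream use in Theorem~\ref{thm:wide-polygons}.
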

\begin{proof}
  Using the triangle inequality,
  we bound $d \leq \|x\| + \|y\|$, where $x$ and $y$ are vertices of $\cP$ adjacent to $0$.
  In particular, let $x$ be the vertex satisfying
  $b_1^Tx = 0$ and $b_2^Tx = 1$.
  Using notation from the proof of Theorem~\ref{thm:wide-polygons},
  we compute $(b_2^\star)^T x = b_2^Tx + \mu b_1^Tx = 1$.
  Since $x \bot b_1$ and hence $x$ is parallel to $b_2$, we get
  \[
    \|x\|^2 = \frac{1}{\| b_2^\star \|^2} \leq \frac{4}{3} \frac{1}{\|b_1\|^2}.
  \]
  The inequality follows from \eqref{eq:gram-schmidt-conditions} and $\|b_1\| \leq \|b_2\|$.
  Similarly, let $y$ be the vertex satisfying
  $b_1^Ty = 1$ and $b_2^Ty = 0$.
  We compute
  $(b_2^\star)^Ty = b_2^Ty - \mu b_1^Ty = -\mu$,
  and conclude using Pythagoras' theorem:
  \[
    \|y\|^2 \leq \frac{1}{\|b_1\|^2} + \frac{\mu^2}{\|b_2^\star\|^2}
      \leq \frac{1}{\|b_1\|^2} + \frac{4}{3} \frac{\mu^2}{\|b_1\|^2} \leq \frac{4}{3} \frac{1}{\|b_1\|^2}.
  \]
  The statement of the lemma follows from $2\sqrt{4/3} = 2.309\ldots$.
\end{proof}

In the second part of this section,
we will show how to find an optimal translate
when the lattice width of $P$ is at most a constant.
We extend a technique which was introduced by Kannan for parametric integer programming~\cite{MR1105115}.
Kannan determines the lattice width direction of the parametric polyhedron $P_b$
as a function of the parameter $b$,
and partitions the parameter space according to width direction and
according to how the respective lattice hyperplanes
interact with the boundary of the polyhedron.
In our case, the lattice width direction is the same for all parameters,
since we only translate the input polygon.
We partition the parameter space only based on interactions
of the boundary of $tv + P$ with the lattice hyperplanes orthogonal to the width direction of $P$.
Our main extension is that we encode counting the number of integer points on lattice slices
in an integer program, where Kannan's work only tested for feasibility.
Our approach is compatible with partitioning the parameter space based on the lattice width direction,
and hence the following Lemma can be extended to even more general
$2$-dimensional \textsc{Integer Point Minimization} problems,
provided that the lattice width is bounded by a constant
for all possible parameter values.

\begin{lemma}
  \label{lemma:thin-polygon-exact-solution}
  Given a dual lattice vector $y \in \Z^2 \setminus \{ 0 \}$ such that $w_y(P) \leq k$,
  the optimal translate of $P$ in direction $v \in \Z^2$
  can be computed in time $2^{O(k \log k)} b^{O(1)}$,
  where $b$ is the encoding length of $P$, $v$, and $y$.
\end{lemma}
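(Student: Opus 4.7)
The plan is to reduce the problem to a polynomial number of integer programs in dimension $O(k)$ and then apply Lenstra's algorithm in fixed dimension~\cite{Lenstra83}. I would first apply a unimodular change of basis (after dividing $y$ by $\gcd(y_1,y_2)$ to make it primitive) so that $y = e_1$. Then $P \subseteq \{a \leq x_1 \leq a+k\}$ for some rational $a$, and every translate $tv+P$ meets at most $k+1$ lattice lines $\{x_1 = j\}$, on each of which the contribution to the count is the number of integers in a vertical interval. Since $v \in \Z^2$, the function $N(t) := |(tv+P)\cap\Z^2|$ has period $1$, so it suffices to minimize over $t \in [0,1)$.

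For each $i \in \{0,\ldots,k\}$ the $i$-th active slice is $\{j_0+i\} \times [a_P(s_i)+tv_2,\,b_P(s_i)+tv_2]$, where $j_0 = \lceil a + tv_1\rceil$ is the leftmost active index, $s_i = (j_0+i)-tv_1$, and $a_P, b_P$ are the piecewise-affine lower/upper boundary functions of $P$. I would partition the parameter space by \emph{patterns} specifying, for each $i$, the linear piece of $a_P$ and of $b_P$ containing $s_i$. Since $s_0,\ldots,s_k$ are spaced at unit intervals, a pattern is determined by $s_0 \in (a,a+1]$, and as $s_0$ sweeps this interval it only crosses $x_1$-coordinates of vertices of $P$ in $(a,a+k+1]$; hence there are at most $O(V)$ patterns, where $V = O(b)$ is the number of vertices of $P$.

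For each pattern $\pi$ and each subset $S \subseteq \{0,\ldots,k\}$ of \emph{non-empty} slices I would set up an integer program with one continuous variable $t$ and $2k+3$ integer variables $j_0, \mu_0,\ldots,\mu_k,\nu_0,\ldots,\nu_k$. The constraints are: $t \in [0,1)$; $j_0 - 1 < a+tv_1 \leq j_0$; pattern consistency $s_i$ in the prescribed piece; the equalities $\mu_i = \lceil a_P(s_i)+tv_2\rceil$ and $\nu_i = \lfloor b_P(s_i)+tv_2\rfloor$, encoded by two-sided weak inequalities with margin $1/D$ (where $D$ is a common denominator of the input data) to replace the strict inequalities coming from floor and ceiling; and $\nu_i \geq \mu_i$ for $i \in S$, $\nu_i \leq \mu_i - 1$ for $i \notin S$. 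Minimize the linear objective $\sum_{i\in S}(\nu_i - \mu_i + 1)$. Within a pattern the functions $a_P(s_i)+tv_2$ and $b_P(s_i)+tv_2$ are affine in $t$ and $j_0$, so this is a genuine mixed-integer linear program. Lenstra's algorithm solves each such program in $2^{O(k\log k)}\,b^{O(1)}$ time; taking the best over all $O(V)\cdot 2^{k+1}$ instances gives the optimum in total time $2^{O(k\log k)}\,b^{O(1)}$.

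The main obstacle is the non-linearity of floor/ceiling and of $\max(0,\cdot)$: the first is handled by the rational margin trick using a common denominator, and the second by the outer enumeration over $S$, costing only a factor $2^{k+1}$. A subtler point is that $j_0$ ranges over $\Theta(|v_1|)$ values within $[0,1)$, which can be exponential in the encoding length $b$; therefore $j_0$ must appear as an integer variable of the IP rather than being enumerated externally, but this is harmless because the total integer dimension stays $O(k)$ and Lenstra's algorithm is polynomial in the binary encoding of the variable bounds.
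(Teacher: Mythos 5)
Your proposal follows essentially the same approach as the paper: reduce to $y=e_1$ by a unimodular map, decompose $tv+P$ into at most $k+1$ vertical integral slices, partition the parameter interval $[0,1)$ into $O(V)$ pieces where the edge--slice incidence pattern is constant, and on each piece solve a mixed-integer program with one continuous variable and $O(k)$ integer variables. Two small differences are worth noting. First, to handle the possibility of empty slices you enumerate over all $2^{k+1}$ subsets $S$ of non-empty slices, multiplying the number of IPs by $2^{k+1}$; the paper instead introduces slack variables $y_i \geq B_i - A_i + 1$, $y_i \geq 0$ and minimizes $\sum_i y_i$, so that the max with $0$ is captured inside a single IP with no extra enumeration. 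Both variants stay within the claimed $2^{O(k\log k)}b^{O(1)}$ bound, so this is a stylistic rather than substantive difference. Second, you attribute the $2^{O(k\log k)}$ running time to Lenstra's algorithm, but Lenstra's original algorithm does not achieve that dependence on dimension; the $n^{O(n)} = 2^{O(n\log n)}$ bound is due to Kannan~\cite{Kannan87}, which is what the paper cites. With that citation corrected, your argument is sound.
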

\begin{proof}
  Using a unimodular transformation if necessary,
  we can assume without loss of generality that $y = e_1$.
  Let us sketch a simple algorithm to compute the number of integer points in a translate $tv + P$.
  Let us denote $\beta = \min_{x \in tv + P} e_1^T x$ the first coordinate of the leftmost point in the translate.
  Let
  \[
    S_i = (tv + P) \cap \{ x \in \R^2 ~:~ x_1 = \lceil \beta \rceil + i \}, i = 0 \dots k
  \]
  denote the integral vertical slices of $tv + P$.
  Note that some of the $S_i$ may be empty.
  For each slice, we can compute the lower end $a_i$ and upper end $b_i$
  and write $S_i = \{ \lceil \beta \rceil + i \} × [a_i, b_i]$.
  It follows that
  \[
    |(tv + P) \cap \Z^2| = \sum_{i=0}^k |S_i \cap \Z^2| = \sum_{i=0}^k \lfloor b_i \rfloor - \lceil a_i \rceil + 1
  \]
  We will argue that this algorithm can be encoded into a small number of integer programs
  that allow us to find the optimal $t$.
  We start by writing down the minimization of the number of integer points based on the $a_i$ and $b_i$.
  \begin{align*}
         \min & \sum_{i=0}^k y_i \\
     & y_i \geq B_i - A_i + 1 & (\forall i) \\
     & y_i \geq 0  & (\forall i) \\
     & b_i - 1 < B_i \leq b_i  & (\forall i) \\
     & a_i \leq A_i < a_i + 1  & (\forall i) \\
     & A_i, B_i \in \Z  & (\forall i)
  \end{align*}
  We can obtain $\gamma = \lceil \beta \rceil$ similarly:
  \begin{align*}
    & \beta = \beta_0 + tv_1 \\
    & \beta \leq \gamma < \beta + 1 \\
    & \gamma \in \Z,
  \end{align*}
  where we precompute $\beta_0$ as the value of $\beta$ for $t = 0$.
  The only remaining task is to encode the computation of the $a_i$ and $b_i$ given $t$ and $\gamma$.

  Suppose that we know which edge of $tv+P$ the point $(\gamma + i, a_i)$ lies on,
  and suppose that the corresponding edge of $P$ lies on the straight line defined by $c^Tx = d$.
  Then $a_i$ is defined uniquely by the equation
\[
  c^T(\gamma+i, a_i)^T - c^T tv = d,
\]
  hence we can express $a_i$ as a linear function in $t$ and $\gamma$
  and add a corresponding constraint to the integer program.

  Unfortunately, the point $(\gamma + i, a_i)$ does not always lie on the same edge.
  Let us separate the translation of the polygon into a horizontal and a vertical component,
  because a vertical translation does not affect the incidence between edges and vertical lines.
  As the polygon is translated horizontally,
  the point $(\gamma + i, a_i)$ moves onto a different edge of the polygon
  when a corresponding vertex of the polygon crosses the vertical line $x_1 = \gamma + i$.
  Over all points $(\gamma + i, a_i)$ and $(\gamma + i, b_i)$,
  such an event happens $n$ times -- once per vertex -- for one unit of horizontal movement.

  Hence we can separate $[0,1)$ into intervals $I_1, \ldots, I_n$
  with the property that the combinatorics of incidences between vertical lines and edges of the polygon
  are constant for all $\gamma - \beta$ within each interval.
  This allows us to solve one integer program for each of the intervals,
  each integer program with the added constrained that $\gamma - \beta \in I_j$ for some $j$,
  and appropriate constraints computing the $a_i$ and $b_i$ as outlined above.
  Together, these $n$ integer programs cover the entire space of possible values for $t$,
  and we simply take the best solution found among all of them.
  Each individual integer program has $O(k)$ variables, $2k + 1$ of which are integer variables,
  and can therefore be solved in time $2^{O(k \log k)} b^{O(1)}$
  using Kannan's algorithm for integer programming~\cite{Kannan87}.
\end{proof}

\begin{proof}[Proof of Theorem~\ref{thm:ptas}]
  We summarize the algorithm as follows:
  \begin{enumerate}
    \item Compute the lattice width and width direction $y \in \Z^2 \setminus \{0\}$ of $P$.
    \item If $w(P) \leq 4k$,
      compute an optimal translate using the algorithm of Lemma~\ref{lemma:thin-polygon-exact-solution}.
    \item Otherwise, assert that every translate is a $(1 + 1/k)$-approximate solution.
  \end{enumerate}
  The correctness of the last step follows from Theorem~\ref{thm:wide-polygons}:
  Let $A$ be the area of $P$ and let $OPT$ be the number of integer points in an optimal solution.
  Then for every $t \in \R$:
  \[
    |(tv + P) \cap \Z^2|
      \leq (1 + \frac{3}{8k})A \leq (1 + \frac{3}{8k}) \cdot \frac{1}{1-\frac{3}{8k}} \cdot OPT
      \leq (1 + \frac{1}{k}) OPT. \qedhere
  \]
\end{proof}

\small
\bibliographystyle{alpha}
\bibliography{papers,mybib,my_publications}

\end{document}